\newcommand{\ul}{\underline}
\newtheorem{prop}{Proposition}
\newtheorem{theorem}{Theorem}
\pgfplotsset{compat=newest}
\definecolor{mycolor1}{RGB}{230,97,1}
\definecolor{mycolor2}{RGB}{178,171,210}
\definecolor{mycolor3}{RGB}{253,184,99}
\definecolor{mycolor4}{RGB}{94,60,153}
\DeclareMathOperator*{\argmin}{argmin}
\newcommand{\ol}{\overline}
\begin{document}

\title{\LARGE \bf On the Impact of the Capacity Drop Phenomenon for \\ Freeway Traffic Flow Control}

\author{Michael E. Cao, Gustav Nilsson, and Samuel Coogan
\thanks{The authors are with the School of Electrical and Computer Engineering, Georgia Institute of Technology, Atlanta, 30332, USA {\tt\small \{mcao34, gustav.nilsson, sam.coogan\}@gatech.edu}. S. Coogan is also with the School of Civil and Environmental Engineering, Georgia Institute of Technology, Atlanta, 30332, USA.}
\thanks{
This research was supported in part by the National Science Foundation under award \#1749357.}
}

\maketitle

\begin{abstract}
Capacity drop is an empirically observed phenomenon in vehicular traffic flow on freeways whereby, after a critical density is reached, a state of congestion sets in, but the freeway does not become decongested again until the density drops well below the critical density. This introduces a hysteresis effect so that it is easier to enter the congested state than to leave it. However, many existing first-order models of traffic flow, particularly those used for control design, ignore capacity drop, leading to suboptimal controllers. In this paper, we consider a cell transmission model of traffic flow that incorporates capacity drop to study the problem of optimal freeway ramp metering. We show that, if capacity drop is ignored in the control design, then the resulting controller, obtained via a convex program, may be significantly suboptimal. We then propose an alternative model predictive controller that accounts for capacity drop via a mixed integer linear program and show that, for sufficiently large rollout horizon, this controller is optimal. We also compare these approaches to a heuristic hand-crafted controller that is viewed as a modification of an integral feedback controller to account for capacity drop. This heuristic controller outperforms the controller that ignores capacity drop but underperforms compared to the proposed alternative model predictive controller. These results suggest that it is generally important to include capacity drop in the controller design process, and we demonstrate this insight on several case studies. 

\end{abstract}

\section{Introduction}
\label{section:intro}

Many transportation networks today are utilized close to or over their capacity. Different types of actuators such as ramp metering and variable speed limits are therefore used to avoid congestion effects, since those congestion effects will lead to an under-utilization of the transportation network's capacity.

This loss of capacity is due to the fact that when the traffic density increases after a certain point, the traffic flow decreases. This phenomena is empirically observed, and incorporated in many of the models for traffic, such as the LWR-model~\cite{lighthill1955kinematic,richards1956shock} and the cell transmission model (CTM)~\cite{daganzo94ctm}. 

While the classical LWR and CTM models assume that the decrease in flow capacity happens gradually with an increase of traffic density, recent work suggests that this drop in flow capacity happens abruptly~\cite{zhang2010, srivastava2013, yuan2015}, and several different ways to model this phenomena have been suggested~\cite{KONTORINAKI201752}. This abrupt drop in flow capacity is commonly referred to as a \emph{capacity drop}. Usually, there is a hysteresis effect associated with the capacity drop, as suggested in, e.g.,~\cite{MONAMY20121388, torne2014coordinated}. When the capacity drop occurs, the traffic saturation enters a congested state, and to leave the congested state---a state with a significantly lower throughput---the traffic density needs to be driven to a lower level than before the capacity drop. The consequence of this effect is that it is much easier to steer the system into the congested state, e.g., through non-robust ramp metering, than to steer the system out of it. 

Though they can be interpreted as separate phenomena, capacity drop and hysteresis are linked so closely that it is assumed the existence of either implies the existence of the other. Thus, this paper uses the terms capacity drop and hysteresis to refer to the same overall behavior. For a full hybrid dynamic description of how to model the capacity drop, we refer the reader to~\cite{burden2018contraction}. 

The focus of this paper is to qualify the importance of taking the hysteresis effect into account when designing ramp metering controllers. While previous simulation studies in~\cite{maggi2015} suggest that it may not be necessary for the controller to take the capacity drop phenomenon into account, our theoretical results quantify when taking the hysteresis effect into account will improve the performance. Under such settings, our results show that incorporating the capacity drop phenomenon in the controller design yields a considerable performance gain during high-demand scenarios. It is worth noting that the need for appropriate control action is the highest during those scenarios.

To illustrate this effect, we implement a model predictive controller (MPC) that takes the capacity drop into account. While MPCs are common for both ramp metering and variable speed limit control on highways~\cite{hegyi2005mpc, gomes2016freeway, muralidharan2012optimal}, and the control actions can be computed efficiently through convex relaxations~\cite{como2016convexity}, these controllers are often developed for non-hysteretic dynamics, and hence they completely neglect capacity drop. In this paper, we show that naively applying an MPC without modelling the hysteretic effect can, depending on the magnitude of the capacity drop, lead to a significant loss of throughput on the freeway. \iftoggle{extended}{}{Due to page constraints, we omit full proofs of the propositions, and refer the reader to the extended version of this paper\footnote{\url{https://arxiv.org/abs/21XX.0XXXX}}.}

The outline of the paper is as follows: In the next section, Section~\ref{section:model formulation}, we introduce basic notation and present the model for traffic flow on a highway with capacity drop. In Section~\ref{section:controllers}, we propose several controllers for ramp-metering, two of which account for capacity drop, and one that ignores it, to be used for comparison. In Section~\ref{section:theoretical_results} we make formal statements about how much performance one can lose by not taking the capacity drop into account when synthesizing controllers. These results also answer the question of when it is important to incorporate the capacity drop in the controller. Additionally, we outline a method for calculating a sufficient horizon for a model predictive controller that accounts for capacity drop. In Section~\ref{section:case studies}, we show how each of the proposed control strategies perform in several different case studies. The paper is concluded with suggestions for future work.

\section{Model and Problem Formulation}
\label{section:model formulation}
Freeway networks are often modelled with a first order model such as Daganzo's Cell Transmission Model (CTM)~\cite{daganzo94ctm}. In this section, we recall the standard CTM model for traffic flow on a freeway and then modify the model to account for the capacity drop phenomenon by introducing an additional binary state variable that indicates the congestion state of a cell.

\subsection{Cell Transmission Model}
We consider a length of freeway divided into $N$ cells, as exemplified in Fig.~\ref{fig:freeway}, numbered 1 to $N$ such that traffic flows from cell \(i\) to cell \(i+1\). In addition, cells may have an attached onramp from which cars can enter the cell from outside the system. There does not necessarily exist an onramp for every cell, and existing ones are numbered with the same index as the cell they are attached to (i.e. onramp 3 is attached to cell 3, regardless if it is the third onramp in the system).

\begin{figure}
    \centering
    \begin{tikzpicture}[scale=0.8]
    \draw[thick] (0,0) -- (7,0);
    \draw[thick] (0,-1) -- (1, -1) -- (0.5, -1.5);
    \draw[thick] (1.25, -1.5) -- (1.75, -1) -- (3, -1) -- (2.5, -1.5);
    \draw[->] (0.87, -1.5) -- ++(0.25,0.25);
    \begin{scope}[shift={(2,0)}]
    \draw[thick] (1.25, -1.5) -- (1.75, -1) -- (3, -1) -- (2.5, -1.5);
    \draw[->] (0.87, -1.5) -- ++(0.25,0.25);
    \end{scope}
    \begin{scope}[shift={(4,0)}]
    \draw[thick] (1.25, -1.5) -- (1.75, -1) -- (3, -1);
    \draw[->] (0.87, -1.5) -- ++(0.25,0.25);
    \end{scope}
    \draw[dashed] (0, -0.5) -- (7, -0.5);
    \draw[->] (0.0, -0.25) -- ++(0.25,0);
    \draw[->] (0.0, -0.75) -- ++(0.25,0);

    \draw[dotted] (3, 0.5) -- (3, -1); 
    \draw[dotted] (5, 0.5) -- (5, -1);
    
    \draw[|->] (0, 0.5) -- node[above] {$\text{cell }1$} (3, 0.5);
    \draw[|->] (3, 0.5) -- node[above] {$\text{cell }2$} (5, 0.5);
    \draw[|->] (5, 0.5) -- node[above] {$\text{cell }3$} (7, 0.5);

    \end{tikzpicture}
    \caption{Example of a freeway segment with three cells.}
    \label{fig:freeway}
\end{figure}
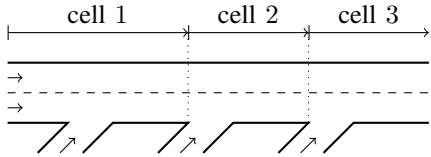

The density of traffic in each cell is denoted by \(x_i \geq 0\) and the density of traffic in each onramp is denoted by \(r_i \geq 0\). Cells each have an associated \textit{supply function} 
\begin{equation}
\label{eq:supply}
s_i(x_i) = -w_i(x_i - x^{\text{jam}}_i) \,,
\end{equation}
where \(w_i > 0\) denotes the shock-wave speed and \(x^{\text{jam}}_i > 0\) denotes the jam density of the cell, which gives the maximum possible inflow to the cell when the density of cars in the cell is equal to \(x_i\). Cells also each have an associated \textit{demand function} 
\begin{equation}
\label{eq:demand}
    d_i(x_i)= v_ix_i\,,
\end{equation}
where \(v_i > 0\) denotes the free-flow speed of the cell, which gives the maximum possible outflow from the cell when the density of cars in the cell is equal to \(x_i\).

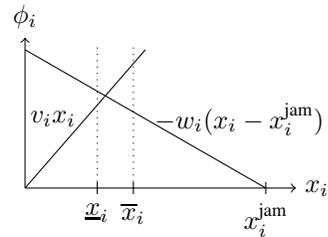
\begin{figure}
    \centering
    
    \begin{tikzpicture}[scale=0.8]
    \draw[->] (0,0) -- (4.5,0) node[right] {$x_i$};
    \draw[->] (0,0) -- (0,2.5) node[above] {$\phi_i$};
    \draw (0,0) -- node[left] {$v_ix_i$} (2,2.3);
    \draw (0,2.3) -- node[right] {$-w_i(x_i - x_i^{\text{jam}})$} (4,0);
    \draw (4,0.1) -- (4,-0.1) node[below] {$x_i^{\text{jam}}$};
    \draw (1.2,0.1) -- (1.2,-0.1) node[below] {$\ul{x}_i$};
    \draw (1.8,0.1) -- (1.8,-0.1) node[below] {$\overline{x}_i$};
    \draw[dotted] (1.2,0) -- (1.2, 2.4);
    \draw[dotted] (1.8,0) -- (1.8, 2.4);

    \end{tikzpicture}
    
    \caption{The supply and demand functions for a cell $i$. The outflow from the cell is limited by the demand function while the inflow to the cell is limited by the supply function.}
    \label{fig:fundamentaldiagram}
\end{figure}

Thus far, the formulation has matched that of a standard CTM. To incorporate capacity drop into the system,  we must now introduce new parameters into the model. Each cell has two associated densities, \(\overline{x}_i\) and \(\underline{x}_i\), such that  \(\overline{x}_i\geq\underline{x}_i>0\) which are the densities at which the cell becomes congested and decongested, respectively. Fig.~\ref{fig:fundamentaldiagram} summarizes the parameters associated with each cell. While cell \(i\) is decongested, it is able to accept any amount of inflow from cell \(i-1\). When cell \(i\) becomes congested, the inflow capacity of cell \(i\) becomes limited by its supply, thus limiting the respective outflow of cell \(i-1\). We denote the congestion state of each cell using the binary variable \(\sigma_i\), which is defined formally later in this section. This is the core addition to the standard CTM that allows us to model capacity drop in the system.

Let \(\phi_i > 0\) denote the outflow from a cell; $\phi_i$ will be defined in terms of previously defined quantities subsequently. To capture offramps, each cell has an associated fixed \(\beta_i \in [0, 1]\) which denotes the ratio of cars that progress onto the next cell as opposed to leaving the system through an offramp. Therefore \(\beta_i = 1\) is equivalent to a cell having no offramp. The inflow of cars into a cell \(i\) from the previous cell \(i-1\) is then \(\beta_{i-1}\phi_{i-1}\). Additionally, cell 1 can receive upstream inflow from outside the system, which we denote as $\lambda_0$.

Onramps do not have the same variables and functions associated to them as cells. Instead, onramps are assigned a \textit{maximum outflow capacity} \(c_i\) representing the maximum possible flow of cars from the onramp to its associated cell, and a \textit{control signal} \(u_i\in[0, 1]\) representing the controlled rate of cars that enter the onramp's associated cell such that the actual rate of cars that enter is \(u_i c_i\) assuming enough cars exist on the onramp. If there is no onramp attached to cell \(i\), the corresponding \(c_i\) is simply set to \(0\). Each onramp also has an associated inflow \(\lambda_i\) which denotes the flow of cars into the onramp from outside the system.

We assume the system is sampled over \(K\) total timesteps with sampling time interval \(h\), where \(h\) is set such that the CFL condition \cite{cfl2013} of the system is fulfilled, i.e. the sample time is not long enough to allow cars to traverse more than one cell during a timestep. Thus, to reference the density \(x_i\) or \(r_i\) of a cell or onramp at a specific timestep \(k\in[0,K]\), we utilize the notation \(x_i(k)\) or \(r_i(k)\). Likewise, we use \(\sigma_i(k)\) to denote a cell's congestion state at timestep \(k\), \(\phi_i(k)\) to denote outflow from a cell at timestep \(k\), \(\lambda_i(k)\) to denote the inflow of cars into the system at timestep \(k\), and \(u_i(k)\) to denote the control signal sent to an onramp at timestep \(k\).


\subsection{Modelling Capacity Drop}
In this subsection, we formally define the factors that allow us to model the capacity drop phenomenon in the system through a hysteresis effect as suggested in~\cite{burden2018contraction}. The previously mentioned congestion state $\sigma_i$ of a cell at timestep~\(k\) follows the dynamics
\begin{equation}
\label{eq:congestion}
\sigma_i(k) = \begin{cases}
            1 & \text{if } x_i(k) \geq \overline{x}_i\,,\\
            0 & \text{if } x_i(k) \leq \underline{x}_i\,,\\
            \sigma_i(k-1) & \text{otherwise}.
        \end{cases}
\end{equation}
If \(x_i(0) > \underline{x}_i\) and \(x_i(0) < \overline{x}_i\), \(\sigma_i(0)\) is assumed to be 0.

The outflow \(\phi_i(k)\) of cell \(i\) at timestep \(k\) is calculated depending on the congestion state of the downstream cell and given by
\begin{equation}
\label{eq:flow}
    \phi_i(k) = \begin{cases}
    \min\{d_i(x_i(k)), {\beta_i}^{-1}{s_{i+1}(x_{i+1}(k))}\}\\
    &\hspace{-60pt} \text{ if } \sigma_{i+1}(k) = 1 \,, \\
    d_i(x_i(k)) &\hspace{-60pt} \text{ if } \sigma_{i+1}(k) = 0\,.
    \end{cases}
\end{equation}
It should be noted that when the downstream cell is in a congested state, the supply function will always limit the outflow of the cell.

The new density of each cell at time \(k+1\) is calculated as
\begin{equation}
\label{eq:density_update}
\begin{split}
x_i(k+1) = x_i(k) &+ \Delta x_i(k) \,, \\
\Delta x_i(k) = h(\beta_{i-1}\phi_{i - 1}(k) - &\phi_i(k)) + \min[r_i(k), u_i(k) c_i] \,,
\end{split}
\end{equation}
and the new density of each onramp at time \(k+1\) is calculated as 
\begin{equation}\label{eq:onramp}
   r_i(k+1) = r_i(k) + \lambda_i(k) - \min[r_i(k), u_i(k) c_i] \,. 
\end{equation}

As cell 1 can receive cars from outside the system, the update equation $\Delta x_1(k) = h(\beta_{i-1}\phi_{i - 1}(k) - \phi_i(k)) + \min[r_i(k), u_i(k) c_i] + \lambda_0(k)$.

Thus, with the system fully defined, the problem that must be solved is the formulation of a controller that optimizes throughput of the system over \(K\) timesteps. More formally, we define the problem as the formulation of a controller that maximizes the objective function $\sum_{k=0}^{K-1}{\sum_{i=1}^{N}{h(1-\beta_i)\phi_i(k)}} $.

For modeling purposes, we note that this is equivalent to minimizing the number of cars present within the system over all time, or more formally, minimizing the objective function
\begin{equation}
    \label{eq:model_obj}
    \sum_{k=0}^{K-1}{\sum_{i=1}^{N}{x_i(k) + r_i(k)}}\,.
\end{equation}

In later sections we will show the effect that accounting for capacity drop, which we have integrated into this system, can have on performance. The proposed controllers are assessed as model predictive controllers so that they are formulated over a $\overline{T}$-step horizon, the first control input of the rollout is applied, then the next control input is computed via a new $\overline{T}$-step horizon rollout.

\section{Freeway Controller Formulation}
\label{section:controllers}
With model in hand, we now discuss three possible control strategies for the freeway network\footnote{
For the full implementations of each controller, we refer the reader to the following GitHub repository: \url{https://github.com/gtfactslab/Cao_CCTA2021}}. First, we introduce a controller that makes the assumption to ignore capacity drop in favor of producing a convex relaxation of the problem \cite{schmittthesis}. We then formulate a controller that accounts for capacity drop by implementing the system dynamics directly into a mixed integer program, of which there are many different solvers available. Finally, we introduce a controller that is hand-crafted to account for capacity drop to serve as a point of comparison.

\subsection{Relaxed Approximate MPC}
One control solution for ramp metering on a freeway is to ignore the capacity drop phenomenon and assume that outflow is always limited by both supply and demand \cite{schmittthesis}. This allows for a convex relaxation of the problem to be formulated and used in a model predictive controller~\cite{como2016convexity}. To illustrate this behavior, we mirror the steps taken by~\cite{schmittthesis} to create a convex relaxation given as
\begin{align*}
 \underset{x(k),\phi(k),u(k)}{\text{minimize}}\quad &\sum_{k=0}^{T-1}\sum_{i=1}^N x_i(k) + r_i(k)\\
 \text{subject to}\quad & \eqref{eq:supply},\eqref{eq:demand},\eqref{eq:density_update},\eqref{eq:onramp} &\forall i\in[1, N]\\
 & \phi_i(k) \leq d_i(x_i(k))\ &\forall i\in[1, N]\\
 & \beta_i\phi_i(k) \leq s_{i+1}(x_{i+1}(k))\ &\forall i \in [1,N-1]\\
 & \phi_i(k) \geq 0 \ &\forall i\in[1, N]\\
 & x_i(0),\ r_i(0)\ \text{given}\ &\forall i\in[1, N] \,.
\end{align*}
We thus refer to this formulation as the Relaxed Approximate MPC (RAMPC). Thanks to its convexity, this controller can calculate a quick solution using most modern solvers.

\subsection{Exact Hysteretic MPC}
Given the model formulation from Section~\ref{section:model formulation}, we introduce a controller that directly implements this system behavior, including capacity drop, into its calculations. Thus, the control problem is formulated as
\begin{align*}
 \underset{x(k),\phi(k),u(k)}{\text{minimize}}\quad &\sum_{k=0}^{T-1}\sum_{i=1}^N x_i(k) + r_i(k)\\
 \text{subject to}\quad & \eqref{eq:supply},\eqref{eq:demand},\eqref{eq:congestion},\eqref{eq:flow},\eqref{eq:density_update}, \eqref{eq:onramp}\ &\forall i\in[1, N] \,.
\end{align*}

This captures the system model exactly as formulated, modeling capacity drop, and thus 
we refer to this controller as the Exact Hysteretic MPC (EHMPC).

This results in a mixed integer linear program, which can be implemented in several modern solvers. In this instance, it was implemented in the Gurobi solver \cite{gurobi}. 

\subsection{A Heuristic Controller}
Given the complexity of the Exact Hysteretic MPC, it is natural to ask if a hand-crafted controller constructed using domain-specific knowledge is capable of accommodating the capacity drop phenomenon and achieving optimal or near-optimal performance with less computational overhead than the Exact Hysteretic MPC. Here, we describe such a controller, referred to as the Heuristic Controller (HC). While the full formulation cannot be included due to page limitations, a general overview of the controller's behavior is provided.

The Heuristic Controller uses the parameters of the system to calculate what it believes to be the optimal densities for each cell. At each timestep, it calculates what the state of the system will be during the next timestep, then adds or withholds cars from their respective onramps to drive the attached cells to their calculated optimal density. This controller is suboptimal, as the calculations are manually tuned, but the controller is able to account for capacity drop by factoring in the congestion state of each cell into its optimal density calculations (i.e. if a cell is congested and the optimal densities require that it not be, it will withhold cars from each onramp attached to a cell that comes before it in an attempt to decongest the cell).

This controller is not considered a major contribution in itself, hence the omission of the full formulation. Rather, it is provided to serve as another point of comparison to the naive approach and illustrate that suboptimally accounting for capacity drop can still result in performance gains.

\medskip

With the controllers formulated, we can now discuss the theoretical gains to be had in accounting for capacity drop as well as some considerations to be made when running the controllers. We will see that ignoring capacity drop induces suboptimality when the control actions are applied to the actual system. Later sections will test each controller in simulations and compare their performance.

\section{Theoretical Results}
\label{section:theoretical_results}
Given the three control strategies defined above, we now consider 
the potential gain in implementing capacity drop and conditions for which accounting for capacity drop is beneficial. We also derive a sufficient horizon length for a hysteretic controller to produce the optimal control action. In addition, we discuss
the generalizability of the calculation to different supply and demand functions. Due to the complex nature of accounting for capacity drop, we discuss these findings for a two-cell system \(i\in[1, 2]\), which is the simplest system that can exhibit capacity drop. We then consider larger networks in the case studies below.

\subsection{Potential Gain in a Two-Cell System}
Consider a two-cell freeway network, $N=2$. We start by defining a condition for which moving cell $2$ from a congested to a decongested state would increase throughput.

We will in this section restrict ourselves to study the specific case when the exogenous inflows are, for a period of time, more than what can potentially flow out from the highway, i.e., cases where more cars flow into the onramps than can exit them every time step, $\lambda_i \geq c_i$ for all $i$. While this situation would cause infinite back spill on the onramps in the long run, we want to be able to fully compare the performance of both controllers without being affected by system limitations.
\begin{prop}\label{prop:decongest}
Consider a freeway system with dynamics~\eqref{eq:supply}--\eqref{eq:density_update} with $N = 2$ and  $\lambda_2 =0, r_2(0) = 0$, 
and suppose that $\lambda_0$ is such that
\begin{equation} \label{eq:fill_condition}
\lambda_0 + c_1 > hd_1(x_1^*) \,,
\end{equation}
where $x_1^* = v_2\overline{x}_2/(\beta_1v_1)$ and 
\begin{equation} \label{eq:drain_condition}
\beta_1\lambda_0 < hd_2(\ul{x}_2) \,.
\end{equation}
Let
\begin{equation}
x_c = \frac{w_2x^{\text{jam}}_2}{v_2+w_2} \,.     \label{eq:x_c}
\end{equation}
If $x_c < \ol{x}_2$, then provided that the rollout horizon is sufficiently long, the Exact Hysteretic MPC enables higher throughput than the Relaxed Approximate MPC.

\end{prop}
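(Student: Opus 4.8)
The plan is to show that, once the system has settled, the Exact Hysteretic MPC (EHMPC) holds cell~$2$ at a density just below $\overline{x}_2$ while the Relaxed Approximate MPC (RAMPC) can sustain nothing better than $x_c$, and that the resulting difference in cell-$2$ outflow, accumulated over the run, favours the EHMPC precisely when $x_c<\overline{x}_2$. The first step is a reduction of the dynamics. With $N=2$ cell~$2$ is terminal, so $\phi_2(k)=d_2(x_2(k))=v_2x_2(k)$ for every $k$; since $\lambda_2=0$ and $r_2(0)=0$ onramp~$2$ stays empty; and $\lambda_1\ge c_1$ keeps onramp~$1$ from running dry, so $\min[r_1(k),u_1(k)c_1]=u_1(k)c_1$ after the first step. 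Conservation of mass gives $\sum_i\bigl(x_i(k)+r_i(k)\bigr)$ equal to the initial mass plus the exogenous mass that has entered by step $k$ minus the mass that has exited; combining $\Delta x_2(k)=h\beta_1\phi_1(k)-hv_2x_2(k)$ with $0\le x_2\le x_2^{\text{jam}}$ shows that the exited mass by step $k$ equals $\beta_1^{-1}hv_2\sum_{j<k}x_2(j)$ up to an $O(1)$ error. Since the entered mass and the initial mass are the same under both controllers, the objective~\eqref{eq:model_obj} differs between them only through $-\beta_1^{-1}hv_2\sum_{k}\sum_{j<k}x_2(j)$: the controller that sustains the larger values of $x_2$ wins, and the instantaneous outflow rate is $\beta_1^{-1}hv_2x_2(k)$.

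Second, I would produce the EHMPC's operating point. Since $d_1(x_1^*)=v_2\overline{x}_2/\beta_1$, condition~\eqref{eq:fill_condition} reads $\beta_1(\lambda_0+c_1)>hv_2\overline{x}_2$ and~\eqref{eq:drain_condition} reads $\beta_1\lambda_0<hv_2\underline{x}_2$. For small $\epsilon>0$, the state $x_1\approx x_1^*$, $x_2=\overline{x}_2-\epsilon$, $\sigma_2=0$ is a fixed point of the true dynamics under a constant meter $u_1c_1\approx hd_1(x_1^*)-\lambda_0$, which lies in $[0,c_1]$ because $hd_1(x_1^*)\le c_1+\lambda_0$ by~\eqref{eq:fill_condition} and $hd_1(x_1^*)>\lambda_0$ (the latter from~\eqref{eq:drain_condition}, since $\beta_1\lambda_0<hv_2\underline{x}_2\le hv_2\overline{x}_2=\beta_1hd_1(x_1^*)$); there cell~$2$ outputs $hv_2(\overline{x}_2-\epsilon)\to hv_2\overline{x}_2$. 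If the system starts with $\sigma_2=1$, first apply $u_1\equiv0$: once cell~$1$ equilibrates the inflow to cell~$2$ is at most $\beta_1\lambda_0<hd_2(\underline{x}_2)$, so $x_2$ drops below $\underline{x}_2$ in finitely many steps and $\sigma_2$ resets to $0$, after which the controller switches to the fixed-point meter. For a rollout horizon above the sufficient length established in this section, the EHMPC reproduces this behaviour: the horizon need only be long enough that the $O(1)$ cost of the decongesting transient is outweighed, within the window, by the higher throughput it unlocks, which is exactly what that sufficient-horizon bound quantifies.

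Third, I would bound the RAMPC closed loop. Its program always imposes $\beta_1\phi_1\le s_2(x_2)$, so in any non-accumulating regime for cell~$2$ the outflow is $\min\{v_2x_2,s_2(x_2)\}$, maximised over $x_2$ at $x_2=x_c$ with value $v_2x_c$; moreover keeping $x_2$ above $x_c$ is strictly worse in that model, and keeping it below $x_c$ leaves cell~$1$ un-supply-limited, so the optimiser pushes $x_2$ back up toward $x_c$. Hence the RAMPC steers its internal state, and with it the true state, toward $x_2=x_c$ --- or, in the corner case $\underline{x}_2>x_c$ where the balancing meter would have to be negative, toward $\beta_1\lambda_0/(hv_2)$, which by~\eqref{eq:drain_condition} is still below $\underline{x}_2\le\overline{x}_2$. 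If instead its aggressive metering overshoots the true $x_2$ past $\overline{x}_2$, cell~$2$ congests, $\Delta x_2=h(s_2(x_2)-v_2x_2)<0$ drains it back down within $O(1)$ steps, and --- the RAMPC model containing no hysteresis --- it never deliberately decongests. In every case $\limsup_k x_2(k)\le\max\{x_c,\beta_1\lambda_0/(hv_2)\}<\overline{x}_2$, hence $\sum_{k}\sum_{j<k}x_2(j)$ is at most $\tfrac12\max\{x_c,\beta_1\lambda_0/(hv_2)\}\,K^2+O(K)$.

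Substituting the two bounds into the reduced objective of the first step, the RAMPC objective exceeds the EHMPC objective by at least $\tfrac12\beta_1^{-1}hv_2\bigl(\overline{x}_2-\epsilon-\max\{x_c,\beta_1\lambda_0/(hv_2)\}\bigr)K^2-O(K)$, which is positive for $\epsilon$ small and the run long enough, so the EHMPC attains strictly higher throughput; equivalently, the EHMPC sustains the strictly larger asymptotic outflow rate $\beta_1^{-1}hv_2\overline{x}_2$. I expect the main obstacle to be the third step: turning ``the relaxed program targets $x_c$'' into a rigorous statement about the trajectory that actually results from applying only the first optimal input at each step to the true dynamics, handling the overshoot-and-congest branch, and disposing of the $\underline{x}_2>x_c$ corner case --- the coupling between onramp~$1$, cell~$1$'s stock, and the supply constraint makes this bookkeeping delicate. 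A secondary point is making ``sufficiently long rollout horizon'' precise by instantiating the section's sufficient-horizon bound so that the threshold in the statement is the same one proved there.
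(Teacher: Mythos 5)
Your proposal is correct and follows essentially the same route as the paper's proof: identify $x_c$ (where $s_2=d_2$) as the best steady-state density the RAMPC can sustain, identify $\overline{x}_2$ as the density the EHMPC can hold while decongested, use \eqref{eq:fill_condition} to guarantee cell~1 can feed that operating point and \eqref{eq:drain_condition} to guarantee decongestion is reachable, and conclude a per-step throughput gain of $h\frac{v_2}{\beta_1}(\overline{x}_2-x_c)>0$. Your additions (the mass-conservation reduction of throughput to $\sum_k x_2(k)$, the explicit check that the balancing meter lies in $[0,c_1]$, and the closed-loop/corner-case caveats in step three) make the argument somewhat more careful than the paper's, which simply asserts the two steady-state operating points, but the underlying decomposition is identical.
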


%


    
\iftoggle{extended}{
\begin{proof}
For the two cell system, in order to achieve an optimal steady-state output, a controller must drive the density of the cells to a point where the inflow into cell 2 matches its outflow. For the Relaxed Approximate MPC, this requirement takes the form of
\begin{equation}
    \label{eq:convex_steadystate}
    \beta_1\min\{d_1(x_1), s_2(x_2)/\beta_1\} = d_2(x_2) \, .
\end{equation}

The maximum value attainable for $x_2$ that fulfills this requirement is such that $s_2(x_2) = d_2(x_2)$. It can be verified that $x_2 = x_c$, where $x_c$ is given in~\eqref{eq:x_c}, satisfies this equality. Moreover, since~\eqref{eq:fill_condition} holds, $x_1$ will be large enough such that $\min\{d_1(x_1), s_2(x_2)/\beta_1\} = s_2(x_2)/\beta_1$. This because the inflow to cell $1$ is enough to reach this point.



Thus, we know that the maximum possible steady-state outflows are
\begin{equation}
    \label{eq:convex_flows}
    \begin{split}
   \phi_1 &= s_2(x_c)/\beta_1 = d_2(x_c)/\beta_1 = v_2x_c/\beta_1 \,, \\
   \phi_2 &= d_2(x_c) = v_2x_c\,, 
   \end{split}
\end{equation}
and maximum possible number of cars exiting the system per timestep is
\begin{equation}
\label{eq:convex_cars_leaving}
         E_{\text{convex}} = h((1 - \beta_1)\phi_1 + \phi_2) 
        = h\frac{v_2x_c}{\beta_1}\,.\\
\end{equation}


By taking advantage of the hysteresis effect, the optimal steady-state requirement becomes $\beta_1d_1(x_1) = d_2(x_2)$, when $x_2 \leq \ol{x}_2$.
Throughput is maximized when the density of cell $2$  is \(\overline{x}_2\), as this is the point where the flow between cells becomes limited by the supply of cell 2, past which the requirement reverts back to the form outlined in \eqref{eq:convex_steadystate}.
Thus, in order to supply cell $2$ with enough cars, the density of cell 1 should be such that
\begin{equation}
    \label{eq:x1_target}
   x_1 = \frac{v_2}{\beta_1v_1}\overline{x}_2\,,
\end{equation}
which results in outflows of $\phi_1 = v_1\frac{v_2}{\beta_1v_1}\overline{x}_2 = \frac{v_2}{\beta_1}\overline{x}_2$ and $\phi_2 = v_2\overline{x}_2$.

The number of cars leaving this system at each timestep can thus be calculated as
\begin{equation}
    \label{eq:hyst_cars_leaving}
         E_{\text{hyst}} = h((1 - \beta_1)\phi_1 + \phi_2)\\
        = h\frac{v_2\overline{x}_2}{\beta_1} \, .
\end{equation}

The increase in throughput gained by accounting for hysteresis can then be quantified as
\begin{equation}
    \label{eq:throughput_gain}
         \Delta E = E_{\text{hyst}} - E_{\text{convex}}
        = h\frac{v_2}{\beta_1}(\overline{x}_2-x_c)\,.
\end{equation}

Thus, so long as $\overline{x}_2 > x_c$, and the hysteretic controller has enough control over the system to drive each cell to the desired density, the hysteretic controller will strictly outperform the convex relaxed controller. 


For the system to be able to decongest, it must be possible for $x_2 < \underline{x}_2$, thus the minimum inflow into the system must be lower than the outflow experienced by cell 2 when decongesting. The condition~\eqref{eq:drain_condition} guarantees this.
%
\end{proof}


}{}

Proposition~\ref{prop:decongest} states that in certain cases, the hysteretic MPC will yield a better performance if the rollout horizon is large enough. In the next section, we discuss a method for deriving a sufficient rollout horizon to guarantee a performance increase.

\subsection{Sufficient Horizon Requirements}
\begin{figure}
\centering
\begin{tikzpicture}[scale=0.7]
\draw[->] (0,0) -- node[below] {Time} (7,0);
\draw[->]  (0,0) --  (0,4);
\node[rotate=90] (o) at (-0.3,2)  {Outflow};
\coordinate (a0) at (0,3.5);
\coordinate (a1) at (1.25,1.5);
\coordinate (a2p) at (5,1.5);
\coordinate (a2) at (7,1.5);
\coordinate (b0) at (0,3.5);
\coordinate (b1) at (1.5,0.5);
\coordinate (b1p) at (2.25, 1.5);
\coordinate (b2) at (3,2);
\coordinate (b2p) at (5,2);
\coordinate (b3) at (7,2);

\draw [name path=MPC, dashed] (a0) to[out=-70,in=180] (a1) to[out=0,in=180] (b1p);
\draw [name path=MPC2, dashed] (b1p) to (a2p);
\draw [dashed] (a2p) to (a2);

\draw [name path=HMPC] (b0) to[out=-70,in=150] (b1) to[out=40,in=-125] (b1p);
\draw [name path=HMPC2] (b1p) to[in=-180] (b2) to[out=0,in=180] (b2p);
\draw [name path=HMPC3] (b2p) to (b3);

\draw[dotted, name path=A] (1.5,0) -- (1.5, 3.6);
\draw[dotted] (3,0) -- (3, 3.6);
\draw[dotted] (5,0) -- (5, 3.6);

\draw[<->] (0, 3.6) -- node[above] {(a)} (1.5, 3.6) ;
\draw[<->] (1.5, 3.6)  -- node[above] {(b)} (3, 3.6); 
\draw[<->] (3, 3.6) -- node[above] {(c)} (5, 3.6);

\tikzfillbetween[of=MPC and HMPC]{mycolor1, opacity=0.2};
\tikzfillbetween[of=MPC2 and HMPC2]{mycolor4, opacity=0.2};
\end{tikzpicture}
\caption{Example of how the outflow evolves with time for a two cell system controlled by both the Relaxed Approximate MPC (dashed) and Exact Hysteretic MPC (solid). The Exact Hysteretic MPC needs to go through three different stages to improve its objective of increased throughput. (a) denotes Decongestion, (b) denotes Recovery, and (c) denotes Steady-State.}
\label{fig:behaviors}
\end{figure}
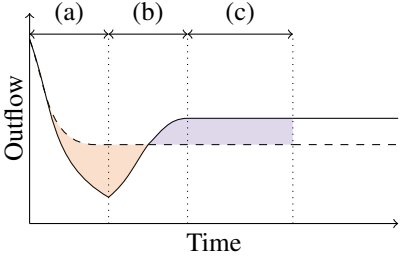
\label{section:horizon_req}
If the system fulfills the conditions set by Proposition~\ref{prop:decongest}, i.e.,~\eqref{eq:fill_condition} and \eqref{eq:drain_condition} are fulfilled and $\ol{x}_2 > x_c$, the optimal control strategy will decongest the system before returning to a state of higher throughput. As this temporarily results in a lower throughput, the rollout horizon for a hysteretic controller must be large enough to capture the long-term benefit of decongestion. The different stages of this process are shown in Fig.~\ref{fig:behaviors}. Thus, in this subsection we \iftoggle{extended}{first }{}present a theorem for deriving a sufficient rollout horizon needed to cover the worst-case scenario, assuming an average upstream inflow $\overline{\lambda}_0$ . \iftoggle{extended}{The rest of the subsection is dedicated to proving this theorem.}{}

\begin{theorem}
\label{theo:sufficient_horizon}
If the conditions for Proposition~\ref{prop:decongest} hold  then  a rollout horizon of $\overline{T}_D + \overline{T}_R + \overline{T}_S$ or greater, where
\begin{align}
    \label{eq:Td}
    \overline{T}_D &= \left \lceil{\frac{\beta_1x_1(0) + (x_c - \underline{x}_2)}{hv_2\underline{x}_2  - \beta_1\overline{\lambda}_0}}\right \rceil,\\
    \nonumber\overline{T}_R &= \argmin_k \{\overline{x}_2 - \underline{x}_2\leq-hv_2\underline{x}_2 \\   
    \hspace{-2.7cm} \label{eq:Tr}&+\sum_{i=1}^{k} [h(\beta_1v_1x_1(\overline{T}_D+i) - v_2x_2(\overline{T}_D+i))]\},\\
   \nonumber \overline{T}_S &= \\
  \label{eq:Ts}  &\left \lceil{\frac{(\overline{T}_D + \overline{T}_R)E_{\text{convex}}-(\overline{T}_D \overline{E}_{\text{decon}} + \overline{T}_R \overline{E}_{\text{recov}})}{E_{\text{hyst}} - E_{\text{convex}}}}\right \rceil \,
\end{align}

where $\lceil \cdot \rceil$ denotes the ceiling operator, and where
\begin{align}
        x_1(n) &= x_1(n-1) + c_1 -hv_1x_1(n-1)   + \lambda_0(n-1),\\
\overline{E}_{\text{decon}} &= h\left((1-\beta_1)v_1\frac{x_1(0)}{2} + v_2\frac{x_c-\underline{x}_2}{2}\right),     \label{eq:decongest_avg_out}\\
   \overline{E}_{\text{recov}} &= h\left((1-\beta_1)v_1\frac{1}{2}\frac{v_2}{\beta_1v_1}\overline{x}_2 + v_2\frac{\overline{x}_2-\underline{x}_2}{2}\right),     \label{eq:recovery_avg_out}
\end{align}
is sufficient for the Exact Hysteretic MPC to achieve optimal throughput.
\end{theorem}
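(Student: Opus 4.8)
The plan is to directly analyze the three-stage trajectory of the Exact Hysteretic MPC depicted in Fig.~\ref{fig:behaviors} and bound the length of each stage so that the cumulative throughput of the EHMPC overtakes that of the Relaxed Approximate MPC by the end of the horizon. Since Proposition~\ref{prop:decongest} already establishes that the per-timestep steady-state gain $E_{\text{hyst}} - E_{\text{convex}} = h v_2(\ol{x}_2 - x_c)/\beta_1 > 0$ is strictly positive under the stated hypotheses, it suffices to show that (i) the decongestion and recovery phases each terminate after finitely many timesteps, with the bounds $\ol{T}_D$ and $\ol{T}_R$ given, and (ii) once steady state is reached, the accumulated deficit from stages (a) and (b) is recouped within $\ol{T}_S$ additional timesteps.

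First I would handle the decongestion phase (a). Here the controller withholds cars on onramp~1 to drive $x_2$ below $\ul{x}_2$; the worst case for how long this takes is when cell~2 starts fully congested (near $x_c$, the Relaxed MPC's steady state) and when the upstream inflow is as large as possible, namely the average $\ol{\lambda}_0$. During this phase cell~2's outflow is $h v_2 x_2 \ge h v_2 \ul{x}_2$ while its inflow is at most $\beta_1 \ol{\lambda}_0$ (cell~1 cannot be starved of cars faster than this since onramp~1 is shut and the exogenous inflow keeps arriving), and cell~1 also has to shed its initial stock $x_1(0)$. Accounting for cars in the system to be drained --- the $\beta_1 x_1(0)$ term for the upstream cell's contribution and $(x_c - \ul{x}_2)$ for cell~2 itself --- and dividing by the net drain rate $h v_2 \ul{x}_2 - \beta_1 \ol{\lambda}_0$ (which is positive precisely by the drain condition~\eqref{eq:drain_condition}) yields the ceiling expression for $\ol{T}_D$ in~\eqref{eq:Td}. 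I would make this rigorous by a telescoping-sum argument on $\sum_i \Delta x_2(i)$ together with a monotonicity/conservation bound showing $x_1$ cannot exceed the auxiliary sequence $x_1(n)$ defined in the theorem statement.

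Next, the recovery phase (b): once $\sigma_2 = 0$, the fill condition~\eqref{eq:fill_condition} guarantees onramp~1 plus upstream inflow can push $x_1$ up fast enough that the inter-cell flow $h\beta_1 v_1 x_1$ exceeds cell~2's outflow $h v_2 x_2$, so $x_2$ climbs back toward $\ol{x}_2$; phase (b) ends at the first timestep $k$ where the accumulated net inflow to cell~2, starting from $\ul{x}_2$ minus the one-step outflow $h v_2 \ul{x}_2$, reaches $\ol{x}_2 - \ul{x}_2$ --- this is exactly the $\argmin_k$ defining $\ol{T}_R$ in~\eqref{eq:Tr}, evaluated along the recovery trajectory $x_1(\ol{T}_D + i)$, $x_2(\ol{T}_D+i)$. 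Finally, the steady-state phase (c): I would lower-bound the EHMPC's outflow during (a) and (b) by the average values $\ol{E}_{\text{decon}}$ and $\ol{E}_{\text{recov}}$ of~\eqref{eq:decongest_avg_out}--\eqref{eq:recovery_avg_out} (obtained by assuming the outflow ramps roughly linearly between endpoints, hence the factor $1/2$), compare against the Relaxed MPC accumulating $E_{\text{convex}}$ per step over the same $\ol{T}_D + \ol{T}_R$ steps, and solve for the number of steady-state steps $\ol{T}_S$ needed for the surplus rate $E_{\text{hyst}} - E_{\text{convex}}$ to close the gap; this is the ceiling in~\eqref{eq:Ts}. Summing, $\ol{T}_D + \ol{T}_R + \ol{T}_S$ is a sufficient horizon, and any larger horizon only helps since the steady-state surplus is accrued at a constant positive rate thereafter.

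The main obstacle I anticipate is justifying the worst-case bounds in phases (a) and (b) against an adversarial choice of the remaining free control inputs and the time-varying inflow $\lambda_0$ consistent with average $\ol{\lambda}_0$: one has to argue that no MPC rollout can do worse than these estimates (so that the horizon is genuinely sufficient), which requires careful monotonicity arguments on the CTM dynamics --- in particular that shutting onramp~1 is the fastest way to decongest, that the auxiliary sequences $x_1(n)$ dominate the true trajectory, and that the linear-ramp approximation underlying $\ol{E}_{\text{decon}}$ and $\ol{E}_{\text{recov}}$ is a valid lower bound on cumulative outflow rather than merely a heuristic average. I would also need to confirm that the EHMPC, being optimal over its horizon, will in fact choose to execute this decongest-recover-steady-state plan once the horizon is long enough to see its benefit --- this follows from Proposition~\ref{prop:decongest} but should be stated explicitly as the bridge between the two results.
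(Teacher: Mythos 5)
Your proposal follows essentially the same route as the paper: the same three-phase decomposition into decongestion, recovery, and steady state, the same worst-case car-conservation accounting for $\overline{T}_D$ (draining $\beta_1 x_1(0) + (x_c - \underline{x}_2)$ at net rate $hv_2\underline{x}_2 - \beta_1\overline{\lambda}_0$), the same forward-simulation characterization of $\overline{T}_R$, and the same linear-ramp averaging with the factor $1/2$ to balance the accumulated deficit against the surplus rate $E_{\text{hyst}} - E_{\text{convex}}$ for $\overline{T}_S$. The obstacles you flag at the end (monotonicity of the drain, worst-case initial states, and the bridge to Proposition~\ref{prop:decongest}) are exactly the points the paper's proof handles, via an error-function monotonicity argument and the worst-case assumptions $x_2(0)=x_c$ and $x_1(\overline{T}_D)=0$.
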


\iftoggle{extended}{
In order to prove this theorem, we refer to Fig.~\ref{fig:behaviors} and note that there are essentially three behaviors that the system can engage in: decongestion, recovery, and steady state. In the worst case scenario, the system must go through all three behaviors to achieve optimal outflow. During steady-state, the system has no need to engage in either of the other two behaviors. If the system is below the optimal steady-state, the system will require recovery before staying in steady-state. If the system starts congested, it will require decongestion and recovery before staying at steady-state. Thus, we can split the sufficient horizon calculation into three distinct calculations $\overline{T}_D$, $\overline{T}_R$, and $\overline{T}_S$, one for each behavior respectively, and solve for each separately. We also note that equations \eqref{eq:decongest_avg_out} and \eqref{eq:recovery_avg_out} do not depend on any $\lambda$, as they are measures of the outflow of the system at optimal capacity. So long as the onramps are able to keep the system at this capacity (which is true so long as 
\eqref{eq:fill_condition} holds), \eqref{eq:decongest_avg_out} and \eqref{eq:recovery_avg_out} are independent of these inputs.
\subsubsection{Decongestion}
\label{section:decongest_horizon}
We begin by calculating a bound on the number of timesteps \(\overline{T}_D\) needed for decongestion.
\begin{prop}
\label{prop:T_D}
Given a two cell system that satisfies \eqref{eq:fill_condition} and \eqref{eq:drain_condition} from Proposition~\ref{prop:decongest} and $\ol{x}_2 > x_c$, $\overline{T}_D$ timesteps are sufficient to capture the decongestion behavior of the system, i.e. $x_2(\overline{T}_D)<\ul{x}_2$, where $\overline{T}_D$ matches the form in \eqref{eq:Td}.
\end{prop}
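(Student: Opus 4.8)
The plan is to exhibit an explicit decongestion strategy and bound its running time by a car‑counting (conservation) argument rather than by tracking $x_2$ step by step. During decongestion the goal is to drain cell $2$ as fast as possible, so I would analyze the strategy $u_1(k)=0$: with the onramp shut, cell $1$ receives only the upstream inflow, its outflow $\phi_1$ is minimized at every step (since $d_1(x_1)=v_1x_1$ is minimized by minimizing $x_1$), hence the inflow $\beta_1\phi_1$ into cell $2$ is also minimized. I would also record that, because cell $2$ is the terminal cell and $\lambda_2=0$, $r_2(0)=0$, its update in~\eqref{eq:density_update} reduces to $\Delta x_2(k)=h(\beta_1\phi_1(k)-\phi_2(k))$ with $\phi_2(k)=d_2(x_2(k))=v_2x_2(k)$ and no supply term. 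Let $T^\ast$ be the first timestep with $x_2(T^\ast)\le\underline{x}_2$; the claim is $T^\ast\le\overline{T}_D$.

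Next I would set up the two bounds. For cell $1$, conservation over $[0,T)$ gives that the total mass it releases, $\sum_{k=0}^{T-1}h\phi_1(k)$, is at most its initial content plus everything it receives, i.e.\ at most $x_1(0)+\sum_{k=0}^{T-1}\lambda_0(k)=x_1(0)+T\overline{\lambda}_0$ using the average upstream inflow; a fraction $\beta_1$ of this reaches cell $2$, so the total inflow to cell $2$ over $[0,T)$ is at most $\beta_1\big(x_1(0)+T\overline{\lambda}_0\big)$. This bound is insensitive to whether $\phi_1$ is demand‑ or supply‑limited, which neatly side‑steps the $\min$ in~\eqref{eq:flow}. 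For the outflow, whenever $k<T^\ast$ we have $x_2(k)>\underline{x}_2$, hence $\phi_2(k)=v_2x_2(k)>v_2\underline{x}_2$, so for every $T\le T^\ast$ the total outflow of cell $2$ over $[0,T)$ exceeds $Thv_2\underline{x}_2$.

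Combining these through the cell‑$2$ update, for $T\le T^\ast$,
\begin{equation}
x_2(T)=x_2(0)+\sum_{k=0}^{T-1}h\big(\beta_1\phi_1(k)-\phi_2(k)\big)\le x_2(0)+\beta_1 x_1(0)+T\beta_1\overline{\lambda}_0-Thv_2\underline{x}_2 .
\end{equation}
Taking the worst‑case congested density at the onset of decongestion to be $x_2(0)=x_c$, the right‑hand side is strictly below $\underline{x}_2$ exactly when $T\,(hv_2\underline{x}_2-\beta_1\overline{\lambda}_0)>\beta_1 x_1(0)+(x_c-\underline{x}_2)$. The denominator $hv_2\underline{x}_2-\beta_1\overline{\lambda}_0=hd_2(\underline{x}_2)-\beta_1\overline{\lambda}_0$ is positive by the drain condition~\eqref{eq:drain_condition} applied to $\overline{\lambda}_0$, so dividing and applying the ceiling shows that $T=\overline{T}_D$ as in~\eqref{eq:Td} forces $x_2(\overline{T}_D)<\underline{x}_2$ (the ceiling makes the inequality strict in the generic case); a short contradiction argument then upgrades this to $T^\ast\le\overline{T}_D$, as asserted.

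I expect the only delicate point to be the reduction to $x_2(0)=x_c$. The justification I would give is that $x_c$ is the largest density at which cell $2$ can persist in the congested state: for $x_2>x_c$ one has $s_2(x_2)<v_2x_2$, so while $\sigma_2=1$ the congested inflow $\beta_1\phi_1\le s_2(x_2)$ cannot match the outflow $v_2x_2$ and $x_2$ strictly decreases — this is also the density at which the Relaxed Approximate MPC settles cell $2$ in the proof of Proposition~\ref{prop:decongest}. Hence larger congested densities either do not arise as the EHMPC's decongestion‑start configuration or only shorten the phase, because the extra outflow they produce is not charged against the inflow budget above. The car‑counting inequalities themselves are routine consequences of~\eqref{eq:density_update}, \eqref{eq:demand}, and the choice $u_1\equiv0$, and the CFL assumption is already in force.
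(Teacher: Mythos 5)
Your proposal is correct and follows essentially the same route as the paper's proof: a worst-case car-counting bound in which cell~$2$'s cumulative inflow is bounded by $\beta_1\bigl(x_1(0)+\overline{T}_D\overline{\lambda}_0\bigr)$, its outflow is bounded below by $hv_2\underline{x}_2$ per step while congested, and the starting density is tightened to $x_2(0)=x_c$. The only differences are cosmetic — you make the conservation bookkeeping explicit via the strategy $u_1\equiv 0$, whereas the paper first argues abstractly that an optimal controller minimizes the decongestion time before writing down the same inequality.
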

\begin{proof}
First, we must prove the existence of an upper bound on the number of timesteps that are required to capture decongestion. We accomplish this by proving that given decongestion must take place, the optimal control action is to minimize the number of timesteps needed for the behavior to complete.

We calculate the outflow for each timestep during the decongestion process as $E_{\text{decon}} = h(1-\beta_1) \cdot \left(\min{\left[v_1x_1(k), \frac{-w_2(x_2(k)-x_2^{\text{jam}})}{\beta_1}\right]}+ v_2x_2(k)\right)
$.

We then define an error function \(\text{err}(k)\) as the difference between the total outflow of cars during each timestep in the decongestion period and the the maximum outflow of cars that does not result in downstream congestion $E_{\text{hyst}}$, i.e., $\text{err}(k) = E_{\text{hyst}} - E_{\text{decon}}$.

By definition, \(x_1(k) + x_2(k)\) is monotonically decreasing during the decongestion process, meaning that \(\text{err}(k)\) is monotonically increasing during this period. Given a set number of cars that enter the system over the entire horizon, and the control objective~\eqref{eq:model_obj}
the loss in outflow modeled by \(\text{err}(k)\) negatively affects the objective every timestep. To minimize this, we must solve
\begin{equation}
\min_{\overline{T}_D}\sum_{k=0}^{\overline{T}_D}{\text{err}(k)} \quad \text{such that} \quad x_2(\overline{T}_D) \leq \underline{x}_2 \,.
\end{equation}

With the knowledge that an upper bound exists, we can now calculate a sufficient one. We consider the number of cars that must leave the system from each cell in order to decongest. If the system starts with \(x_1(0) \geq \overline{x}_1, x_2(0) \geq \overline{x}_2,\) then assuming all onramp inflow is cut off from the system, the second cell will outflow up to $\beta_1x_1(0) + \beta_1\sum_{k = 0}^{\overline{T}_D}{\lambda_0(k)} + (x_2(0) - \underline{x}_2)$
cars at a rate of \(hv_2\underline{x}_2\) cars per timestep in the worst case scenario. If the actual outflow is higher, the system will decongest faster, and if the outflow is lower, the system is already decongested. With the assumed average upstream inflow $\overline{\lambda}_0$, our solution becomes the smallest $\overline{T}_D$ that satisfies $\overline{T}_Dhv_2\underline{x}_2\geq\beta_1x_1(0) + \beta_1\overline{T}_D\overline{\lambda}_0 + (x_2(0) - \underline{x}_2)$,
and we can thus calculate an upper bound \(\overline{T}_D\) on how many timesteps it will take to decongest cell 2 through $\overline{T}_D = \left \lceil{\frac{\beta_1x_1(0) + (x_2(0) - \underline{x}_2)}{hv_2\underline{x}_2 - \beta_1\overline{\lambda}_0}}\right \rceil$.

We can tighten this bound by noting that in a congested system, the maximum outflow is achieved at the point where \(x_2(k) = x_c\). Thus, any controller attempting to minimize the previously defined objective, even if it does not account for capacity drop, will attempt to drive the system to this point. We can consequently consider the initial state of cell 2 during decongestion to be \(x_2(0) = x_c\), as this is the actual point where we want the behavior of the controller to differ from one that does not account for capacity drop.
Thus, our bound matches the form stated in \eqref{eq:Td}.
\end{proof}

\subsubsection{Recovery}
We then look at the number of timesteps \(\overline{T}_R\) needed for the recovery period. 

\begin{prop}
\label{prop:T_R}
Given a two cell system that satisfies \eqref{eq:fill_condition} and \eqref{eq:drain_condition} from Proposition~\ref{prop:decongest} and $\ol{x}_2 > x_c$, $\overline{T}_R$ timesteps is sufficient to capture the recovery behavior of the system, i.e. $x_2(\ol{T}_D + \ol{T}_R)=\ol{x}_2$, where $\overline{T}_R$ is of the form outlined in~\eqref{eq:Tr}.
\end{prop}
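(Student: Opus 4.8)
The plan is to pin down the dynamics that govern the recovery phase, argue that the throughput-optimal controller drives that phase as quickly as possible, and then count the timesteps the fastest admissible recovery consumes. By Proposition~\ref{prop:T_D} we have $x_2(\overline{T}_D) < \underline{x}_2$, so $\sigma_2(\overline{T}_D) = 0$ and, by~\eqref{eq:congestion}, $\sigma_2$ stays $0$ until $x_2$ next reaches $\overline{x}_2$. While $\sigma_2 = 0$, equation~\eqref{eq:flow} forces $\phi_1(k) = d_1(x_1(k)) = v_1 x_1(k)$ --- cell~$2$ absorbs all upstream inflow, unconstrained by its supply --- and $\phi_2(k) = d_2(x_2(k)) = v_2 x_2(k)$, so by~\eqref{eq:density_update} the second cell obeys $x_2(k+1) = (1-hv_2)x_2(k) + h\beta_1 v_1 x_1(k)$ throughout recovery. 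The CFL condition ($hv_2 \le 1$) makes this update monotone increasing in both $x_1(k)$ and $x_2(k)$, so a comparison argument shows $x_2$ grows fastest when $x_1(k)$ is as large as possible at every step; maximizing $x_1$ means running onramp~$1$ saturated ($u_1 = 1$, feasible because $\lambda_1 \ge c_1$), which produces exactly the cell-$1$ recursion $x_1(n) = x_1(n-1) + c_1 - hv_1 x_1(n-1) + \lambda_0(n-1)$ stated in the theorem. Condition~\eqref{eq:fill_condition} makes the fixed point of this recursion exceed the steady-state target $v_2\overline{x}_2/(\beta_1 v_1)$ of~\eqref{eq:x1_target}, so cell~$1$ eventually supplies more than enough flow and recovery is guaranteed to terminate.

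Next I would argue that the fastest recovery is the optimal one. Recovery is a stretch of strictly sub-steady-state throughput (Fig.~\ref{fig:behaviors}(b)), whereas once $x_2 = \overline{x}_2$ the system enters the high-throughput regime with outflow $E_{\text{hyst}} > E_{\text{convex}}$ by~\eqref{eq:throughput_gain}; an exchange argument then shows that, over a horizon long enough for the decongestion detour to pay off, any strategy postponing the return to $\overline{x}_2$ is dominated by one that front-loads the fill, so the optimal recovery follows the trajectories above. For the horizon bound it is in any case enough to use this fastest-recovery strategy as a witness that $\overline{T}_R$ steps suffice, since the EHMPC does at least as well over the horizon.

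Finally I would do the bookkeeping. Proposition~\ref{prop:T_D} only guarantees $x_2(\overline{T}_D) < \underline{x}_2$, so the longest recovery starts from the smallest such value; since the last decongestion step scales $x_2$ down from a level above $\underline{x}_2$ by no more than the factor $1-hv_2$, we get $x_2(\overline{T}_D) > \underline{x}_2 - hv_2\underline{x}_2$, so recovery must add at most $\overline{x}_2 - \underline{x}_2 + hv_2\underline{x}_2$ to $x_2$. Accumulating the per-step increments $h(\beta_1 v_1 x_1(\overline{T}_D + i) - v_2 x_2(\overline{T}_D + i))$ along the fastest trajectory until they cover this quantity yields precisely the smallest $k$ for which $\overline{x}_2 - \underline{x}_2 \le -hv_2\underline{x}_2 + \sum_{i=1}^{k} h(\beta_1 v_1 x_1(\overline{T}_D + i) - v_2 x_2(\overline{T}_D + i))$, which is the definition~\eqref{eq:Tr} of $\overline{T}_R$; being the fastest fill, $\overline{T}_R$ steps bring $x_2$ up to $\overline{x}_2$.

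The step I expect to be the main obstacle is the second one: rigorously showing that rushing the fill --- keeping onramp~$1$ saturated throughout recovery rather than metering it down toward the steady-state target --- is genuinely optimal, and making the monotone-comparison argument tight enough that the particular sequences $x_1(\cdot)$ and $x_2(\cdot)$ in~\eqref{eq:Tr} are indeed the ones realized along the relevant trajectory; quantifying precisely how long the horizon must be for this is also delicate, but that is what the term $\overline{T}_S$ in Theorem~\ref{theo:sufficient_horizon} ultimately absorbs. The first and third steps are essentially bookkeeping once the recovery dynamics are fixed.
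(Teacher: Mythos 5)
Your proof is correct and follows essentially the same route as the paper's: identify the demand-limited (decongested) recovery dynamics, run the worst-case forward simulation with the onramp saturated so that cell 1 follows the stated recursion, and count steps until $x_2$ reaches $\overline{x}_2$, with the optimality of the fastest fill handled by an error/exchange argument. The only divergence is in the accounting for the $-hv_2\underline{x}_2$ offset in \eqref{eq:Tr}: the paper obtains it by starting the worst case at $x_1(\overline{T}_D)=0$, $x_2(\overline{T}_D)=\underline{x}_2$, so that the first recovery step drains cell 2 by $hv_2\underline{x}_2$ before cell 1 can supply anything, whereas you attribute it to undershoot of $x_2$ below $\underline{x}_2$ at the end of decongestion; both readings yield the same formula, but you should still state the worst-case initial condition $x_1(\overline{T}_D)=0$ explicitly, since the sequence $x_1(\overline{T}_D+i)$ appearing in \eqref{eq:Tr} is only well defined once that initialization is fixed.
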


\begin{proof}
As before, we must prove the existence of an upper bound on the number of timesteps that are required to capture the behavior. We mirror the logic taken in the previous section and define $E_{\text{recov}} =h((1-\beta_1)\min{[v_1x_1(k), \frac{-w_2(x_2(k)-x_2^{\text{jam}})}{\beta_1}]} + v_2x_2(k))$ and
$\text{err}(k) = E_{\text{hyst}} - E_{\text{recov}}
$, and see that
$\min_{\overline{T}_R}\sum_{k=0}^{\overline{T}_R}{\text{err}(k)}$
is solved by the lowest \(\overline{T}_R\) such that \(\text{err}(\overline{T}_R) \leq 0\), thus proving the existence of an upper bound. 

Given the existence of an upper bound, we may now calculate a sufficient one. In this case, we assume the cells start at the worst-case state left over from the decongestion period: \(x_1(\overline{T}_D) = 0, x_2(\overline{T}_D) = \underline{x}_2\). Since the system is now decongested, we know each update equation to be limited by demand only, and we can use their decongested forms to adjust the update equations for \(x_2, x_1\) as
\begin{equation}
    \begin{split}
    x_2(k+1) =\ &x_2(k) + h(\beta_1v_1x_1(k) - v_2x_2(k)) \,,\\
    x_1(k) =\ &x_1(k-1)\\
    &+ \min{[c_1, r_1(k-1)]} + \lambda_0(k-1)\\
    &- hv_1x_1(k-1)
    \end{split}
\end{equation}
for $k > 1$ and $x_2(1) =x_2(0) + h(\beta_1v_1x_1(0) - v_2x_2(0))$.

Ultimately, the goal is for \(x_2(\overline{T}_D + \overline{T}_R) \geq \overline{x}_2\).
In our worst-case assumption, \(x_1(\overline{T}_D)=0\), and thanks to condition \eqref{eq:fill_condition}, \(r_1(k) \geq c_1\) for all time. Thus our constraint becomes the lowest integer value of \(\overline{T}_R\) that satisfies the form outlined in~\eqref{eq:Tr}. Note that this is effectively running a forward simulation of the system with full input from the onramps and calculating how many timesteps it takes for the system to congest.
\end{proof}

To reach the desired densities, we assume \(c_1\) is high enough to increase cell 1 to the desired density calculated in \eqref{eq:x1_target}. This effectively means that $ c_1 \geq h\frac{v_2}{\beta_1}\overline{x}_2 - \lambda_0$. If the inequality is not true, it is not possible for the system to achieve the optimal steady-state densities. This does not mean that it is no longer worth decongestion, only that the arrived steady-state will have suboptimal outflow compared to the maximum possible.

\subsubsection{Steady State}
As shown in Fig.~\ref{fig:behaviors}, the overall output of the system during decongestion and recovery is lower than it would be given a convex relaxed controller. Thus, we consider the amount of timesteps \(\overline{T}_S\) in steady-state that are required for the benefit of decongestion to manifest fully within the hysteretic controller's rollout horizon.
\begin{prop}
\label{prop:T_S}
Given a two cell system that satisfies \eqref{eq:fill_condition} and~\eqref{eq:drain_condition} from Proposition~\ref{prop:decongest} and $\ol{x}_2 > x_c$, \(\overline{T}_S\) or more timesteps in steady-state, where $\overline{T}_S$ is of the form shown in \eqref{eq:Ts}, is sufficient for the overall output of the system over $\overline{T}_D+\overline{T}_R+\overline{T}_S$ timesteps to be greater than the output a convex relaxed controller produces during the same number of timesteps.
\end{prop}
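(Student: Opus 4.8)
The plan is to compare the cumulative number of cars that exit the freeway under the two controllers over a common window of $\ol{T}_D + \ol{T}_R + \ol{T}_S$ timesteps, and to pick $\ol{T}_S$ just large enough that the Exact Hysteretic MPC's running total overtakes that of the Relaxed Approximate MPC. Write $N_{\text{tot}} = \ol{T}_D + \ol{T}_R + \ol{T}_S$. For the RAMPC side, I would invoke the proof of Proposition~\ref{prop:decongest}: under the standing assumptions the Relaxed Approximate MPC can do no better than drive the network toward $x_2 = x_c$ and sustain $E_{\text{convex}}$ cars per timestep as in~\eqref{eq:convex_cars_leaving}, so its output over the window is at most $N_{\text{tot}}\, E_{\text{convex}}$.

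For the EHMPC side I would use Fig.~\ref{fig:behaviors} as the roadmap and bound the output phase by phase. By Proposition~\ref{prop:T_D}, cell $2$ is decongested within $\ol{T}_D$ steps; by Proposition~\ref{prop:T_R}, it is recovered to $\ol{x}_2$ within a further $\ol{T}_R$ steps; thereafter the controller holds the optimal steady state and exits $E_{\text{hyst}}$ cars per step, with $E_{\text{hyst}}$ as in~\eqref{eq:hyst_cars_leaving}. During the $\ol{T}_D$ decongestion steps the worst-case average output is at least $\ol{E}_{\text{decon}}$ from~\eqref{eq:decongest_avg_out}, and during the $\ol{T}_R$ recovery steps at least $\ol{E}_{\text{recov}}$ from~\eqref{eq:recovery_avg_out}; both follow by interpolating the monotone density trajectories of the two cells between their known endpoints and using that each $\phi_i$ is affine in the densities. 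Summing the three phases, the EHMPC output over the window is at least $\ol{T}_D \ol{E}_{\text{decon}} + \ol{T}_R \ol{E}_{\text{recov}} + \ol{T}_S E_{\text{hyst}}$.

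It then remains to force the EHMPC lower bound past the RAMPC upper bound, i.e., to require
\begin{equation}
\ol{T}_D \ol{E}_{\text{decon}} + \ol{T}_R \ol{E}_{\text{recov}} + \ol{T}_S E_{\text{hyst}} \;\geq\; (\ol{T}_D + \ol{T}_R + \ol{T}_S)\, E_{\text{convex}} \,,
\end{equation}
and to solve for $\ol{T}_S$. Collecting the $\ol{T}_S$ terms on one side and dividing by $E_{\text{hyst}} - E_{\text{convex}}$, which is strictly positive since $\ol{x}_2 > x_c$ makes $E_{\text{hyst}} > E_{\text{convex}}$ via~\eqref{eq:throughput_gain}, gives
\begin{equation}
\ol{T}_S \;\geq\; \frac{(\ol{T}_D + \ol{T}_R) E_{\text{convex}} - (\ol{T}_D \ol{E}_{\text{decon}} + \ol{T}_R \ol{E}_{\text{recov}})}{E_{\text{hyst}} - E_{\text{convex}}} \,.
\end{equation}
Since $\ol{T}_S$ must be a nonnegative integer, the ceiling of the right-hand side --- precisely~\eqref{eq:Ts} --- is sufficient (and if that quantity is negative, $\ol{T}_S = 0$ already works). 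Finally, monotonicity in $\ol{T}_S$, since each additional steady-state step adds $E_{\text{hyst}} - E_{\text{convex}} > 0$ to the margin, shows that ``$\ol{T}_S$ or more'' steps suffice.

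I expect the main obstacle to be making the two average-output bounds $\ol{E}_{\text{decon}}$ and $\ol{E}_{\text{recov}}$ rigorous: one must argue that the worst-case density paths over the decongestion and recovery phases are the monotone ones between the stated endpoints, so that the time-average of the affine outflow is bounded below by the midpoint expressions in~\eqref{eq:decongest_avg_out} and~\eqref{eq:recovery_avg_out}, and one must check that the RAMPC cumulative output over the finite window is genuinely dominated by $N_{\text{tot}} E_{\text{convex}}$ rather than being transiently inflated by draining stored cars during its own approach to $x_c$. With those points settled, the remainder is the one-line rearrangement above.
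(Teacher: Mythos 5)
Your argument is correct and follows essentially the same route as the paper's proof: bound the EHMPC output phase-by-phase using the average outflows $\ol{E}_{\text{decon}}$ and $\ol{E}_{\text{recov}}$, bound the RAMPC output by $(\ol{T}_D+\ol{T}_R+\ol{T}_S)E_{\text{convex}}$, and solve the resulting inequality for $\ol{T}_S$ to obtain \eqref{eq:Ts}. Your added remarks on the positivity of $E_{\text{hyst}}-E_{\text{convex}}$ and on the need to justify the midpoint averages are more explicit than the paper's treatment, but they do not change the substance of the argument.
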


\begin{proof}First, we estimate the average outflow from the congestion and recovery periods from each cell by using the worst-case bounds from the previous two periods. Since the supply and demand functions are in a linear form, the average outflow during the decongestion period can be estimated using expression \eqref{eq:decongest_avg_out} and the average outflow during the recovery period can be estimated using expression \eqref{eq:recovery_avg_out}. Note that the expression in \eqref{eq:recovery_avg_out} utilizes the calculated target density for cell 1 from \eqref{eq:x1_target}.

We have quantified the number of cars leaving the system during each timestep in the form of \(E_{\text{hyst}}, E_{\text{convex}}, \Delta E\) from equations \eqref{eq:convex_cars_leaving}, \eqref{eq:hyst_cars_leaving}, and \eqref{eq:throughput_gain}. This allows us to formulate an expression for \(\overline{T}_S\), such that $\overline{T}_SE_{\text{hyst}} + \overline{T}_D \overline{E}_{\text{decon}} + \overline{T}_R \overline{E}_{\text{recov}}
    \geq (\overline{T}_S + \overline{T}_D + \overline{T}_R)E_{\text{convex}}$
which results in~\eqref{eq:Ts}.
\end{proof}

With all behaviors now bounded, we can now prove Theorem \ref{theo:sufficient_horizon}.

\begin{proof}[Proof of Theorem \ref{theo:sufficient_horizon}] 
From Proposition~\ref{prop:T_D}, the system can decongest in $\overline{T}_D$ timesteps or fewer. From Proposition~\ref{prop:T_R}, the system can recover in $\overline{T}_R$ timesteps or fewer. From Proposition~\ref{prop:T_S}, the system needs at most $\overline{T}_S$ timesteps to have more cars exit the system over the entire rollout horizon than it would at the maximum possible outflow in the congested state over the same horizon. Thus, a rollout horizon of $\overline{T}_D + \overline{T}_R + \overline{T}_S$ is sufficient for a hysteretic controller to produce the optimal control action. 
\end{proof}
}{}






\subsection{Supply and Demand Generalizability}
The formulation of \(\overline{T}_D,\  \overline{T}_R,\ \text{and }\overline{T}_S\) in Section~\ref{section:horizon_req} assume the forms for the supply and demand functions outlined in \eqref{eq:supply} and \eqref{eq:demand}. These formulations, however, can be generalized to any valid supply and demand functions that fulfill the conditions for a valid CTM. In particular, only the demand function is used in the formulation. Thus, so long as the demand functions \(d_i(x)\) are monotonically increasing and are such that \(d_i(0) = 0\), the horizon calculations can be generalized. The formulation for \(\overline{T}_D\) takes the form
\begin{equation}
\label{eq:td_general}
    \overline{T}_D = \left \lceil{\frac{\beta_1x_c + (x_c - \underline{x}_2)}{hd_2(\underline{x}_2)-\overline{\lambda}_0}}\right \rceil \,,
\end{equation}
and  \(\overline{T}_R\) is now the smallest integer value that satisfies
\begin{equation}
\label{eq:tr_general}
    \begin{split}
    \overline{x}_2 - \underline{x}_2 \leq\ &-hd_2(\underline{x}_2) +\sum_{i=1}^{\overline{T}_R} [h(\beta_1d_1(x_1(\overline{T}_D+i)) \\
    \hspace{-1cm} &- d_2(x_2(\overline{T}_D+i)))] 
\end{split}
\end{equation}
where
\begin{equation}
    x_1(k) = x_1(k-1) 
    + c_1 + \lambda_0(k-1) -hd_1(x_1(k-1))\,.
\end{equation}

\(\overline{T}_S\) can be calculated using the relevant outflows at the target densities. Care should be taken when calculating the average outflows in each of the other periods if the supply and demand functions are in non-affine forms.

\medskip

Thus, we have provided theoretical conditions which, if fulfilled, result in a hysteretic controller outperforming one that ignores capacity drop. We have also provided a method for determining a sufficient rollout horizon such that the hysteretic controller is able to produce the optimal output. These theoretical results provide explicit horizon computations for a two-cell system, which we focus on in order to illustrate the effects of accounting for capacity drop locally. Nonetheless, the fundamental insight of Theorem \ref{theo:sufficient_horizon} that the EHMPC will outperform the RAMPC given a sufficiently large horizon applies to general networks, as demonstrated in the following case studies.


\section{Case Studies}
\label{section:case studies}
We have proven that there is significant outflow to be gained under certain conditions for a two-cell system. In this section, we provide empirical results for each controller on systems with more than two cells and illustrate the differences in behavior of each controller. The full set of simulation parameters for each system is found in Table~\ref{table:cell_params}.
\begin{figure}
    \begin{center}
        \input{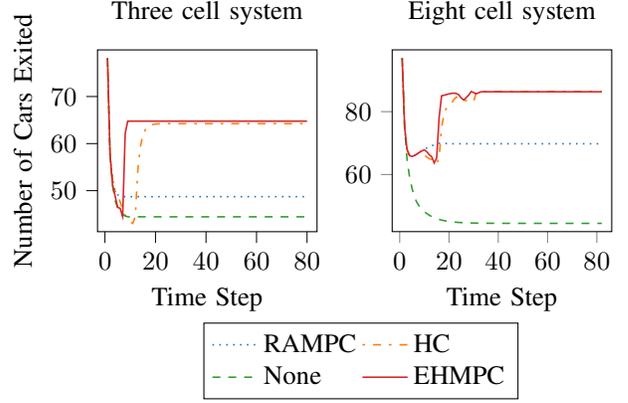}
    \end{center}
    \caption{Controller performance on each system.}
    \label{fig:sim_results}
    \iftoggle{extended}{}{\vspace{-0.5cm}}
\end{figure}

Fig.~\ref{fig:sim_results} displays the results of running each controller on a three-cell system and an eight-cell system, respectively, in order to illustrate performance gains between controllers. These systems begin in a congested state and remain congested, resulting in a suboptimal throughput rate without a controller. For the three-cell system the RAMPC was initialized with a rollout horizon of 51, and the EHMPC was initialized with a rollout horizon of 21 and a control memory (i.e. the number of control actions to execute before recalculating) of 5 actions. For the eight-cell system the EHMPC was initialized with a rollout horizon of 11 and no control memory, making it a true model predictive controller, and the RAMPC was initialized with the same rollout horizon as the three-cell system.

\begin{table}
 \caption{Simulation parameters used for each system.}
\label{table:cell_params}
\footnotesize
 \begin{tabular}{lll}
 \hline
 Parameter & Three-Cell System & Eight-Cell System\\ [0.5ex] 
 \hline
 sample time \(h\) & 1/120 hrs & 1/120 hrs\\ 
 number of timesteps \(K\) & 81 & 83\\
 number of cells $N$ & 3 & 8\\
 cells with onramps & cells 1, 2 & cells 1, 3, 5, 7\\
 onramp max flow \(c_i\) & 60 & 60\\
 onramp inflow $\lambda_{1...N}$ & \(80 \forall k\in [0, K)\) &  \(80 \forall k\in [0, K)\)\\
 upstream inflow $\lambda_0$ & 40 & 0\\
 \(\overline{x}_i\) & 110 & 110\\
 \(\underline{x}_i\) & 70 & 70\\
 free flow speed \(v_i\) & 60 & 60\\
 shockwave speed \(w_i\) & 20 & 20\\ 
 jam density \(x_i^{\text{jam}}\) & 320 & 320\\ 
 beta terms & \(\beta_1,\beta_2 = 0.9\) & \(\beta_1,...,\beta_7 = 0.9\)\\
 & \(\beta_3 = 1.0\) & \(\beta_8 = 1.0\)\\
 initial density \(x_i(0)\) & \(x_1(0) = 0, \) & 150 \\
 & \( x_2(0),x_3(0) = 150\) & \\ [0.5ex] 
 \hline
 \end{tabular}
 \centering
 \vspace{0.2cm}
 
 All parameters are consistent across every cell unless otherwise noted.
\end{table}

The Relaxed Approximate controller performs better than no controller in both systems by somewhat restricting the inflow of cars into the cells. Without accounting for hysteresis, however, this controller keeps the systems in congested states. This is because the RAMPC formulation does not differentiate between the congested and uncongested states and therefore does not recognize the benefit in driving the system to the uncongested state.
The Heuristic Controller decongests the systems in both cases, though it takes longer than needed. 

The Exact Hysteretic MPC recognizes the need for decongestion and severely limits the inflow from each onramp in order to push each system into its decongested state as quickly as possible, before returning it to steady-state. This steady-state is also at a much higher output when compared to the other control methods, as the controller's awareness of the system's hysteretic behavior allows it to take advantage of the greater free flow that can be found at higher densities. Even the Heuristic Controller's suboptimal steady-state flow still outperforms the Relaxed Approximate controller. Thus, we show that accounting for hysteresis, even imperfectly, produces a nonneglible improvement on the outflow of the system.

 
While the more complex eight-cell system produces interesting behaviors during decongestion and recovery, the EHMPC and the Heuristic Controller eventually achieve a higher steady-state output than the RAMPC. However, it is worth noting the increase in overall computation times between each controller, recorded in Table~\ref{table:completion_times}.
\begin{table}
\centering
 \caption{Time to Simulation Completion (seconds)}
 \begin{tabular}{ccc} 
 \hline
 Controller & Three-Cell & Eight-Cell \\ 
 \hline
 RAMPC & 76 & 85\\ 
 EHMPC & 445 & 6190\\
 HC & $<1$ & \(<1\)\\
 \hline
 \end{tabular}
 \label{table:completion_times}
\vspace{-0.6cm}
\end{table}
As the calculations performed in the EHMPC are more complex, the computation time increases between the two systems, whereas the computation time for the RAMPC increases only slightly. However, the Heuristic Controller outperforms the RAMPC in both outflow and computation time, thus illustrating that there is still a significant advantage to accounting for capacity drop, even if done so heuristically.

\section{Conclusions and Future Work}
\label{section:conclusions}
In this paper we have shown theoretically and with case studies 
that explicitly incorporating the capacity drop phenomenon when designing ramp metering strategies, particularly those based on model predictive control, leads to improved system performance. 
In particular, optimal control is achieved with a model predictive controller formulated as a mixed integer program with sufficiently long rollout horizon. Future directions of research include obtaining optimal or near-optimal controllers that account for capacity drop but avoid the computational complexity inherent in a large-scale mixed integer formulation. One approach might be to obtain distributed implementations similar to those that have been proposed in~\cite{rosdahl2018distributed, ba2016distributed}.

\bibliographystyle{ieeetr}
\vspace{-0.2cm}
\bibliography{ref.bib}

\iftoggle{extended}{
\appendix
\subsection{EHMPC Gurobi Implementation}
While Gurobi variables can be defined as binary, the solver does not allow complex boolean logic calculations or assigning comparisons such as \(\geq,\leq\) to variables as constraints. Thus, in order to accommodate the congestion state calculations we must define some additional variables and intermediate calculations.

We refer back to the congestion state calculation \eqref{eq:congestion} and see that it is essentially made up of three conditional statements. These conditionals must be implemented separately into the Gurobi solver, as the update equation is too complex to implement as a single constraint.

To implement the first conditional \(x_i(k)\geq\overline{x}_i\), we define \(\delta_c\) to be a congestion ``buffer" term and \(\overline{x}_{gi}\) to be a ``goal" term such that $\overline{x}_{gi} = \overline{x}_{i} - \delta_c$.

We choose $\delta_c$ to be a small enough value such that $\overline{x}_{gi}$ is close to $\overline{x}_{i}$, the importance of which is shown below. We then define the next few steps as follows:
\begin{equation*}
 \begin{split}
    a_{\overline{x}_i}(k) = (x_i(k) &- \overline{x}_{gi})/\delta_c\,, \quad 
    \overline{a}_i(k) = \min[a_{\overline{x}_i}(k), 1] \,, \\
    a_i(k) &= \max[\overline{a}_i(k), 0] \,.
     \end{split}
\end{equation*}
Separating each step in the above manner is crucial as each individual step can be input to Gurobi as a separate constraint. This operation results in an indicator variable \(a_i(k)\) such that
\begin{equation}
    \label{eq:above_x_upper_explanation}
    a_i(k) = 
        \begin{cases}
            1 & \text{if } x_i(k) \geq \overline{x}_i\,,\\
            0 & \text{if } x_i(k) \leq \overline{x}_{gi}\,,\\
            (x_i(k) - \overline{x}_{gi})/\delta_c & \text{otherwise}.
        \end{cases}
\end{equation}
Thus, so long as a sufficiently small \(\delta_c\) is chosen, \(a_i(k)\) is practically equivalent to the first conditional in Equation~\ref{eq:congestion}. We can similarly perform steps to model the second conditional \(x_i(k)\leq\underline{x}_i\) using the same \(\delta_c\):
\begin{equation*}
    \underline{x}_{gi} = \underline{x}_{i} \,, \quad
    b_{\underline{x}_i}(k) = (\underline{x}_{gi} - x_i(k))/\delta_c \,, 
\end{equation*}
\begin{equation*}
    \overline{b}_i(k) = \min[b_{\underline{x}_i}(k), 1] \,, \quad
    b_i(k) = \max[\overline{b}_i(k), 0] \,,
\end{equation*}
which results in an indicator variable \(b_i(k)\) such that
\begin{equation}
    \label{eq:below_x_lower_explanation}
    b_i(k) = 
        \begin{cases}
            1 & \text{if } x_i(k) \leq \underline{x}_i - \delta_c \,, \\
            0 & \text{if } x_i(k) \geq \underline{x}_{gi} \,,\\
            (\underline{x}_{gi} - x_i(k))/\delta_c & \text{otherwise}.
        \end{cases}
\end{equation}

It should be noted that \(a_i(k)\) is flagged as true (i.e. equal to 1) when the density reaches \(\overline{x}\) exactly, whereas there is a bit of a buffer below \(\underline{x}\) before \(b_i(k)\) is flagged as true. This is mainly done for safety reasons in the controller, in case of floating point error between the controller and the simulator.

Finally, we define a binary variable $z_i(k) = a_i(k) \lor b_i(k)$.

Thus, having defined \(a_i(k), b_i(k), z_i(k)\), Equation~\eqref{eq:congestion} can be implemented as a set of General Indicator Constraints.

Similarly, the outflow calculation must be separated into individual components in order to be implemented. To do so, we define two variables, representing the decongested and congested outflow, respectively, of each cell at each timestep such that $\phi^d_i(k) = d_i(k)$ and $\phi^c_i(k) = \min(d_i(x_i(k)),s_{i+1}(x_{i+1}(k))/\beta_i)$.
The true outflow calculation can then be set as a set of General Indicator Constraints utilizing \(\sigma_i(k)\).

Thus, the system behavior is entirely encoded into the Gurobi framework. This allows for the optimal control action to be calculated for the system, provided that the MPC rollout horizon is sufficiently long, as discussed in Section~\ref{section:theoretical_results}. However, because the above formulation is a nonconvex mixed integer linear program, the computation time is generally longer for the Exact Hysteretic MPC than the convex Relaxed Approximate MPC.
}{}

\end{document}